\colorlet{shadecolor}{yellow}
\DeclareMathAlphabet{\mathbbb}{U}{bbold}{m}{n}
\theoremstyle{remark}
\newtheorem{theorem}{Theorem}
\newtheorem{proposition}{Proposition}
\newtheorem{remark}{Remark}
\newcommand{\hermconj}  {^{\mathsf{H}}}
\newcommand{\trans}     {^{\mathsf{T}}}
\newcommand{\pha}[1]    {\underline{#1}}
\newcommand{\phaconj}[1]{\overline{\underline{#1}}}
\newcommand{\vect}[1]   {\boldsymbol{#1}}
\newcommand{\phavec}[1] {\underline{\boldsymbol{#1}}}
\newcommand{\mat}[1]    {\boldsymbol{#1}}
\newcommand{\phamat}[1] {\underline{\boldsymbol{#1}}}
\begin{document}
\bstctlcite{IEEEexample:BSTcontrol}
    \title{Passivity and Decentralized Stability Conditions for Grid-Forming Converters}
    \author{Xiuqiang~He,~\IEEEmembership{Member,~IEEE,}
          and~Florian~Dörfler,~\IEEEmembership{Senior Member,~IEEE}
  \thanks{This work was supported by the European Union’s Horizon 2020 research and innovation program under Grant 883985.}
  \thanks{The authors are with the Automatic Control Laboratory, ETH Zurich, 8092 Zurich, Switzerland. Email:\{xiuqhe,dorfler\}@ethz.ch.}}

\maketitle


\begin{abstract}
We prove that the popular grid-forming control, i.e., dispatchable virtual oscillator control (dVOC), also termed complex droop control, exhibits output-feedback passivity in its \textit{large-signal} model, featuring an explicit and physically meaningful passivity index. Using this passivity property, we derive decentralized stability conditions for the \textit{transient stability} of dVOC in multi-converter grid-connected systems, beyond prior small-signal stability results. The decentralized conditions are of practical significance, particularly for ensuring the transient stability of renewable power plants under grid disturbances.
\end{abstract}

\begin{IEEEkeywords}
Complex droop control, dispatchable virtual oscillator control, grid-forming control, passivity, transient stability.
\end{IEEEkeywords}

\vspace{-5mm}

\section{Introduction}

\IEEEPARstart{R}{enewable} energy sources are being ever-increasingly integrated into power grids. \textit{Grid-forming control technologies} are emerging in power systems to maintain system strength and stability. Unlike large-capacity synchronous generators, renewable power generation features a substantial number of small-capacity generating units connected to power grids. Conventional stability analysis faces growing difficulties when applied to investigate numerous renewable power generating units in a centralized manner. These challenges stem from both analytical intractability and computational burdens due to numerous generating units as well as confidentiality restrictions on system-wide data or model availability \cite{liu2022stability}. Additionally, since grid-forming converters operate autonomously, and exhibit rich and varied dynamic behavior, aggregated equivalent modeling and analysis \cite{li2017practical} is rarely applicable.

Decentralized stability analysis has drawn increasing attention in recent years \cite{spanias2019system,dey2023passivity,yang2020distributed}. Decentralized approaches divide nodes and networks, treating each node's dynamics and the network coupling separately, relying solely on local and essential network data. Passivity serves as a fundamental tool in decentralized stability analysis, applicable to both nonlinear and linearized systems \cite{khalil2002nonlinear} (roughly corresponding to transient stability and small-signal stability analysis in power systems). There have been many studies exploring passivity-based conditions for small-signal stability of \textit{linearized} converter systems \cite{spanias2019system,dey2023passivity}. However, the area of decentralized stability analysis for \textit{large-signal transient stability} remains underexplored \cite{liu2022stability}. In \cite{yang2020distributed}, decentralized stability conditions for nonlinear dynamics of synchronous generators and classical grid-forming controls are established, ensuring the \textit{local stability} of a given equilibrium point and corresponding controller setpoints. 

We examine the passivity and decentralized stability conditions for the most recent grid-forming control, dispatchable virtual oscillator control (dVOC), also called complex droop control \cite{he2012quanlitative}. Our results feature \textit{large-signal passivity analysis} for \textit{global asymptotic stability}, which, to the best of our knowledge, cannot or can hardly be achieved through classical grid-forming controls. We reveal that dVOC-controlled multi-converter grid-connected systems, satisfying our prescribed decentralized conditions, achieve large-signal/transient stability.

\section{System Model With Feedback Connection}

Grid-forming converters measure their output currents and establish the terminal voltages, as depicted in Fig.~\ref{fig:block-diagram}(a). In our modeling and analysis, a three-phase balanced condition is considered, and the \textit{grid synchronous reference frame} is employed \cite{spanias2019system} since the system's steady state converges to the grid frequency. In the grid reference frame, the converter \textit{voltage} and the \textit{output current} are expressed with $dq$-axis components as $\pha v_k \coloneqq v_{{\rm d},k} +j v_{{\rm q},k}$ and $\pha i_k \coloneqq i_{{\rm d},k} +j i_{{\rm q},k}$, respectively. We use underlines throughout to indicate complex variables.

\begin{figure}
  \begin{center}
  \includegraphics{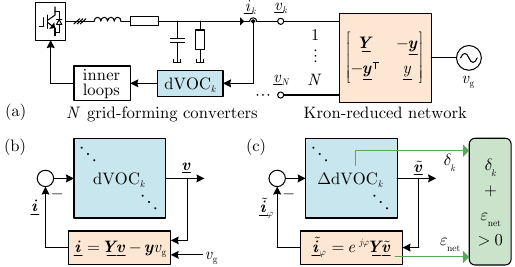}
  \caption{(a) Multi-converter grid-connected systems with dVOC grid-forming control. (b) System model with feedback connection. (c) In incremental form w.r.t to equilibria, the passivity index of the dVOC-controlled node dynamics is denoted as $\delta_k$ while the passivity index of the (rotated) network is denoted as $\varepsilon_{\rm net}$, establishing decentralized stability conditions as $\delta_k + \varepsilon_{\rm net} > 0, \forall k$.}
  \label{fig:block-diagram}
  \end{center}
\end{figure}

\textit{1) Grid-Forming Dynamics:} After transformed into the grid reference frame, the dVOC grid-forming voltage dynamics in complex-voltage coordinates are given as \cite{he2012quanlitative}
\begin{equation}
\label{eq:dvoc}
    \dot{\pha{v}}_k = j\omega_{\Delta} \pha{v}_k + \eta_k e^{j \varphi} \bigl(\tfrac{p_k^{\star} - jq_k^{\star}}{v_k^{\star 2}} \pha{v}_k - \pha i_k \bigr) + \eta_k \alpha_k \tfrac{v_k^{\star 2} - \vert\pha{v}_k\vert^2}{v_k^{\star 2}} \pha{v}_k,
\end{equation}
with gains $\eta_k \in \mathbb{R}_{>0}$, $\alpha_k \in \mathbb{R}_{>0}$, and grid frequency deviation $\omega_{\Delta} \coloneqq \omega_0 - \omega_{\rm g}$. The angle ${\varphi} \in [0,\pi/2]$ is chosen typically as the network impedance angle, and $p_k^{\star}$, $q_k^{\star}$, and $v_k^{\star}$ are the setpoints for active power, reactive power, and voltage amplitude, respectively. In \eqref{eq:dvoc}, the first term $j\omega_{0} \pha{v}_k$ (before rotation) stands for an oscillator at $\omega_0$, the second term synchronizes the relative phases to track the power setpoints via current feedback, and the third term regulates the voltage amplitude.

\textit{2) Network Representation:}
For the set of $N$ converters, we denote $\phavec{v} \coloneqq [\pha v_1,\cdots,\pha v_N]\trans$ and $\phavec{i} \coloneqq [\pha i_1,\cdots,\pha i_N]\trans$. Consider that the network electromagnetic dynamics and the converter inner-loop dynamics are sufficiently fast such that the dVOC dynamics do not interfere with these dynamics. Under this assumption, we consider a static network representation as
\begin{equation}
\label{eq:network}
    \phavec{i} = \phamat{Y}\, \phavec{v} - \phavec{y} v_{\rm g},
\end{equation}
where $v_{\rm g}$ is the grid voltage, and the admittance matrix $\phamat{Y}$ and vector $\phavec{y}$ can be obtained from Kron reduction, where the intermediate nodes in the network are eliminated, see Fig.~\ref{fig:block-diagram}(a). If a virtual impedance is used for current limiting during grid faults \cite{qoria2020current}, it should also be included in \eqref{eq:network}.

\textit{3) System Model With Feedback Connection:}
We synthesize the node dynamics in \eqref{eq:dvoc} and the network equation in \eqref{eq:network} to establish the closed-loop feedback system, as shown in Fig.~\ref{fig:block-diagram}(b). We further define an equivalent system, where the network admittance is rotated by $\varphi$, i.e., the angle employed in dVOC. This equivalent system proves to be more tractable in passivity analysis, and this rotation transformation is also consistent with the typical $p$-$q$ frame transformation in grid-forming feedback controls concerning inductive-resistive networks \cite{de2007voltage}. To do so, we define rotated current as $\pha i_{\varphi,k} \coloneqq e^{j \varphi} \pha i_k$ and $\phavec{i}_{\varphi} \coloneqq e^{j \varphi} \phavec{i}$. The node dynamics in \eqref{eq:dvoc} and the network equation in \eqref{eq:network} are represented equivalently as
\begin{align}
\label{eq:rotated-dvoc}
    \dot{\pha{v}}_k &= \eta_k \bigl[\tfrac{j\omega_{\Delta}}{\eta_k} \pha{v}_k + (\sigma_{\varphi, k}^{\star} + j\rho_{\varphi, k}^{\star}) \pha{v}_k - \pha i_{\varphi,k} + \alpha_k \tfrac{v_k^{\star 2} - \vert\pha{v}_k\vert ^2}{v_k^{\star 2}} \pha{v}_k \bigr] , \\
\label{eq:rotated-network}
    \phavec{i}_{\varphi} &= e^{j \varphi} \phamat{Y}\, \phavec{v} - e^{j \varphi} \phavec{y} v_{\rm g}.
\end{align}
where $\sigma_{\varphi, k}^{\star} + j\rho_{\varphi, k}^{\star} \coloneqq e^{j \varphi} \tfrac{p_k^{\star} - jq_k^{\star}}{v_k^{\star 2}}$. We define steady-state voltages and currents with a subscript $\rm s$ as $\phavec{v}_{\rm s} = [\pha{v}_{{\rm s},1}, \cdots, \pha{v}_{{\rm s},N}]\trans$, $\phavec{i}_{\varphi{\rm s}} = [\pha{i}_{\varphi{\rm s},1}, \cdots, \pha{i}_{\varphi{\rm s},N}]\trans$, and increment variables w.r.t the steady state as $\Tilde{\phavec{v}} \coloneqq \phavec{v} - \phavec{v}_{\rm s}$ and $\Tilde{\phavec{i}}_{\varphi} \coloneqq \phavec{i}_{\varphi} - \phavec{i}_{\varphi{\rm s}}$.

\section{Passivity and Stability Results}
\label{sec:passivity analysis}

We first recall the definition of passivity. The system $\dot {\vect x} = \vect f(\vect x,\vect u)$, $\vect y = \vect h(\vect x,\vect u)$ is said to be output-feedback passive if there exists a positive semidefinite storage function $V(\vect x)$ such that $\vect u\trans \vect y \geq \dot V + \vect y\trans \vect \rho(\vect y)$ for some function $\vect \rho$ \cite[Def. 6.3]{khalil2002nonlinear}. Moreover, it is said to be input-feedforward passive if $\vect u\trans \vect y \geq \dot V + \vect u\trans \vect \varphi(\vect u)$ for some function $\vect \varphi$ \cite[Def. 6.1]{khalil2002nonlinear}. If there exits $\vect \rho(\vect y) = \delta \vect y$ for some $\delta \in \mathbb{R}$ or $\vect \varphi(\vect u) = \varepsilon \vect u$ for some $\varepsilon \in \mathbb{R}$, we refer $\delta$ or $\varepsilon$ to as the passivity index.

\subsection{Passivity Index of Grid-Forming Dynamics}


\begin{proposition}
\label{prop:passivity-dvoc}
The dVOC dynamics in \eqref{eq:rotated-dvoc}, with input $-\pha i_{\varphi,k}$ and output $\pha{v}_k$, are output-feedback passive w.r.t an equilibrium $\pha{v}_{{\rm s},k}$, where the passivity index is given as
\begin{equation}
\label{eq:node-passivity-index}
\delta_k \coloneqq -\Re \bigl\{e^{j\varphi} \tfrac{p_k^{\star} - jq_k^{\star}}{v_k^{\star 2}} \bigr\} - \alpha_k + \alpha_k \tfrac{1}{2 v_k^{\star 2}} \vert \pha v_{{\rm s},k} \vert^2.
\end{equation}
\end{proposition}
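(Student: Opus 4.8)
The plan is to verify output-feedback passivity directly, in incremental coordinates, with the quadratic storage function $V_k(\Tilde{\pha v}_k) \coloneqq \tfrac{1}{2\eta_k}\vert\Tilde{\pha v}_k\vert^2 \geq 0$, where $\Tilde{\pha v}_k \coloneqq \pha v_k - \pha v_{{\rm s},k}$ and $\Tilde{\pha i}_{\varphi,k} \coloneqq \pha i_{\varphi,k} - \pha i_{\varphi{\rm s},k}$. For complex scalars the passivity inequality of \cite[Def.~6.3]{khalil2002nonlinear} with input $-\Tilde{\pha i}_{\varphi,k}$, output $\Tilde{\pha v}_k$, and $\vect\rho(\vect y) = \delta_k\vect y$ reads $-\Re\{\overline{\Tilde{\pha v}_k}\,\Tilde{\pha i}_{\varphi,k}\} \geq \dot V_k + \delta_k\vert\Tilde{\pha v}_k\vert^2$, which is the target. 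First I would subtract the steady-state relation (set $\dot{\pha v}_k = 0$ in \eqref{eq:rotated-dvoc}) from \eqref{eq:rotated-dvoc} to obtain the error dynamics
\begin{equation*}
\tfrac{1}{\eta_k}\dot{\Tilde{\pha v}}_k = \bigl(\tfrac{j\omega_\Delta}{\eta_k}+\sigma_{\varphi,k}^{\star}+j\rho_{\varphi,k}^{\star}+\alpha_k\bigr)\Tilde{\pha v}_k - \Tilde{\pha i}_{\varphi,k} - \tfrac{\alpha_k}{v_k^{\star 2}}\bigl(\vert\pha v_k\vert^2\pha v_k - \vert\pha v_{{\rm s},k}\vert^2\pha v_{{\rm s},k}\bigr).
\end{equation*}

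Next I would compute $\dot V_k = \tfrac{1}{\eta_k}\Re\{\overline{\Tilde{\pha v}_k}\dot{\Tilde{\pha v}}_k\}$ by multiplying the error dynamics by $\overline{\Tilde{\pha v}_k}$ and taking real parts. The oscillator term drops out, $\Re\{j\omega_\Delta\vert\Tilde{\pha v}_k\vert^2\} = 0$; the linear power term yields $\Re\{(\sigma_{\varphi,k}^{\star}+j\rho_{\varphi,k}^{\star})\vert\Tilde{\pha v}_k\vert^2\} = \sigma_{\varphi,k}^{\star}\vert\Tilde{\pha v}_k\vert^2$; and the amplitude term contributes $\alpha_k\vert\Tilde{\pha v}_k\vert^2 - \tfrac{\alpha_k}{v_k^{\star 2}}\Re\{\overline{\Tilde{\pha v}_k}(\vert\pha v_k\vert^2\pha v_k - \vert\pha v_{{\rm s},k}\vert^2\pha v_{{\rm s},k})\}$. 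The crux is the cubic term: expanding $\pha v_k = \pha v_{{\rm s},k}+\Tilde{\pha v}_k$ (equivalently, integrating $\tfrac{d}{dt}(\vert\pha v(t)\vert^2\pha v(t))$ along the segment $\pha v(t) = \pha v_{{\rm s},k}+t\Tilde{\pha v}_k$, which makes the sign structure transparent) and writing $r \coloneqq \Re\{\overline{\pha v_{{\rm s},k}}\Tilde{\pha v}_k\}$, I expect
\begin{equation*}
\Re\bigl\{\overline{\Tilde{\pha v}_k}(\vert\pha v_k\vert^2\pha v_k - \vert\pha v_{{\rm s},k}\vert^2\pha v_{{\rm s},k})\bigr\} = \vert\pha v_{{\rm s},k}\vert^2\vert\Tilde{\pha v}_k\vert^2 + 2r^2 + 3r\vert\Tilde{\pha v}_k\vert^2 + \vert\Tilde{\pha v}_k\vert^4.
\end{equation*}

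Collecting the pieces and substituting the definition \eqref{eq:node-passivity-index} of $\delta_k$, the surplus should come out as
\begin{equation*}
-\Re\{\overline{\Tilde{\pha v}_k}\,\Tilde{\pha i}_{\varphi,k}\} - \dot V_k - \delta_k\vert\Tilde{\pha v}_k\vert^2 = \tfrac{\alpha_k}{v_k^{\star 2}}\bigl(\tfrac{1}{2}\vert\pha v_{{\rm s},k}\vert^2\vert\Tilde{\pha v}_k\vert^2 + 2r^2 + 3r\vert\Tilde{\pha v}_k\vert^2 + \vert\Tilde{\pha v}_k\vert^4\bigr),
\end{equation*}
where the leftover $\tfrac{\alpha_k}{2v_k^{\star 2}}\vert\pha v_{{\rm s},k}\vert^2\vert\Tilde{\pha v}_k\vert^2$ is precisely the last term of \eqref{eq:node-passivity-index}. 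It then suffices that the parenthesized quantity is nonnegative; here I would invoke Cauchy--Schwarz, $r^2 \leq \vert\pha v_{{\rm s},k}\vert^2\vert\Tilde{\pha v}_k\vert^2$, to bound $\tfrac12\vert\pha v_{{\rm s},k}\vert^2\vert\Tilde{\pha v}_k\vert^2 \geq \tfrac12 r^2$, so the quantity is at least $\tfrac52 r^2 + 3r\vert\Tilde{\pha v}_k\vert^2 + \vert\Tilde{\pha v}_k\vert^4$, a quadratic in $r$ with positive leading coefficient and discriminant $9\vert\Tilde{\pha v}_k\vert^4 - 10\vert\Tilde{\pha v}_k\vert^4 < 0$, hence nonnegative. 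This gives $-\Re\{\overline{\Tilde{\pha v}_k}\,\Tilde{\pha i}_{\varphi,k}\} \geq \dot V_k + \delta_k\vert\Tilde{\pha v}_k\vert^2$, i.e., the claimed output-feedback passivity with index $\delta_k$.

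The main obstacle is the cubic amplitude nonlinearity: one has to expand its incremental form without error (the $2r^2 + 3r\vert\Tilde{\pha v}_k\vert^2 + \vert\Tilde{\pha v}_k\vert^4$ pattern is easy to get wrong) and then produce a sign-definite lower bound, the Cauchy--Schwarz step being what ``uses up'' half of the $\tfrac{\alpha_k}{v_k^{\star 2}}\vert\pha v_{{\rm s},k}\vert^2$ coefficient --- which is exactly why $\delta_k$ in \eqref{eq:node-passivity-index} carries the factor $\tfrac12$ rather than $1$. Everything else (the oscillator term cancelling as purely imaginary times a real quantity, the linear power term producing $\sigma_{\varphi,k}^{\star}$) is routine.
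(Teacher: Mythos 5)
Your proposal is correct and follows essentially the same route as the paper: the same incremental storage function $V_k = \tfrac{1}{2\eta_k}\vert\pha v_k - \pha v_{{\rm s},k}\vert^2$, subtraction of the steady-state relation, and isolation of the cubic amplitude term, with the identity and bounds checking out (including the expansion $\vert\pha v_{{\rm s},k}\vert^2\vert\Tilde{\pha v}_k\vert^2 + 2r^2 + 3r\vert\Tilde{\pha v}_k\vert^2 + \vert\Tilde{\pha v}_k\vert^4$ and the discriminant argument). The only difference is that where the paper invokes the inequality $(\vect x - \vect y)\trans(\norm{\vect x}^2\vect x - \norm{\vect y}^2\vect y) \geq \tfrac{1}{2}\norm{\vect y}^2\norm{\vect x-\vect y}^2$ from \cite[Proposition 6]{he2012quanlitative}, you re-derive it from scratch via Cauchy--Schwarz, which is a legitimate self-contained substitute for the citation.
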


\begin{proof}
A positive-definite storage function is defined as
\begin{equation*}
    V_k \coloneqq \tfrac{1}{2 \eta_k} (\phaconj v_k - \phaconj v_{{\rm s},k} ) (\pha v_k - \pha v_{{\rm s},k} ),
\end{equation*}
where $\phaconj v_k$ and $\phaconj v_{{\rm s},k}$ are the \textit{complex conjugate} of $\pha v_k$ and $\pha v_{{\rm s},k}$, respectively (overlines indicate complex conjugate), and $\pha v_{{\rm s},k}$ satisfies the steady-state relationship of \eqref{eq:rotated-dvoc} as
\begin{equation}
\label{eq:steady-relationship}
    \pha i_{\varphi {\rm s},k} = \tfrac{j\omega_{\Delta}}{\eta_k} \pha v_{{\rm s},k} + (\sigma_{\varphi, k}^{\star} + j\rho_{\varphi, k}^{\star}) \pha v_{{\rm s},k} + \alpha_k \tfrac{v_k^{\star 2} - \vert\pha v_{{\rm s},k}\vert ^2}{v_k^{\star 2}} \pha v_{{\rm s},k}.
\end{equation}
The time derivative of $V_k$ along the node dynamics in \eqref{eq:rotated-dvoc} is ${\dot V}_k = \tfrac{1}{2\eta_k} (\phaconj v_k - \phaconj v_{{\rm s},k} ) \dot {\pha v}_k + \tfrac{1}{2\eta_k} \dot{\phaconj v}_k (\pha v_k - \pha v_{{\rm s},k}) = \tfrac{1}{\eta_k} \Re \{(\phaconj v_k - \phaconj v_{{\rm s},k} ) \dot {\pha v}_k \} = \Re \bigl\{(\phaconj v_k - \phaconj v_{{\rm s},k} ) \bigl[\tfrac{j\omega_{\Delta}}{\eta_k} \pha v_k + (\sigma_{\varphi, k}^{\star} + j\rho_{\varphi, k}^{\star})\pha v_k - \pha i_{\varphi {\rm s},k} + \alpha_k \tfrac{v_k^{\star 2} - \vert\pha{v}_k\vert ^2}{v_k^{\star 2}} {\pha v_k} \bigr] \bigr\} + \Re \{(\phaconj v_k - \phaconj v_{{\rm s},k} )(\pha i_{\varphi {\rm s},k} - \pha i_{\varphi,k})\}$. By applying \eqref{eq:steady-relationship}, we obtain that ${\dot V}_k = (\sigma_{\varphi, k}^{\star} + \alpha_k) \vert {\pha v_k - \pha v_{{\rm s},k}} \vert ^2 - \alpha_k \tfrac{1}{v_k^{\star 2}} \Re \bigl\{ (\phaconj v_k - \phaconj v_{{\rm s},k}) (\vert\pha v_k\vert^2 \pha v_k - \vert\pha v_{{\rm s},k}\vert^2 \pha v_{{\rm s},k}) \bigr\} + \Re \{(\phaconj v_k - \phaconj v_{{\rm s},k} )(\pha i_{\varphi {\rm s},k} - \pha i_{\varphi,k})\}$. We bound the second term as $\Re \bigl\{ (\phaconj v_k - \phaconj v_{{\rm s},k}) (\vert\pha v_k\vert^2 \pha v_k - \vert\pha v_{{\rm s},k}\vert^2 \pha v_{{\rm s},k}) \bigr\} \geq \tfrac{1}{2}\vert \pha v_{{\rm s},k} \vert^2 \vert\pha v_k - \pha v_{{\rm s},k} \vert^2 $, which holds due to $(\vect x - \vect y) \trans (\norm{\vect x}^2 \vect x - \norm{\vect y}^2 \vect y) \geq \tfrac{1}{2} \norm{\vect y}^2 \norm{\vect x - \vect y}^2$, $\forall \vect x, \vect y \in \mathbb{R}^2$ \cite[Proposition 6]{he2012quanlitative}, in real variable form. Therefore,
\begin{equation}
\label{eq:bound}
\begin{aligned}
     {\dot V}_k & \leq (\sigma_{\varphi, k}^{\star} + \alpha_k - \alpha_k \tfrac{1}{2 v_k^{\star 2}} \vert \pha v_{{\rm s},k} \vert^2) \vert\pha v_k - \pha v_{{\rm s},k} \vert^2 \\
     & \quad \ + \Re \{(\phaconj v_k - \phaconj v_{{\rm s},k} )(\pha i_{\varphi {\rm s},k} - \pha i_{\varphi,k})\},
\end{aligned}
\end{equation}
which indicates that the node dynamics with incremental input $\pha u_k \coloneqq -( \pha i_{\varphi,k} - \pha i_{\varphi {\rm s},k} )$ and output $\pha y_k \coloneqq (\pha v_k - \pha v_{{\rm s},k})$ is output-feedback passive, as in the standard real-variable form $\vect u \trans \vect y \geq \dot{V} + \delta_k \vect y \trans \vect y$, where $\delta_k = - (\sigma_{\varphi, k}^{\star} + \alpha_k - \alpha_k \tfrac{1}{2 v_k^{\star 2}} \vert\pha v_{{\rm s},k}\vert^2)$ is identified as the node passivity index.
\end{proof}

\begin{remark}
The inequality \eqref{eq:bound} physically means that the power flowing into the converter is greater than the rate of change of the energy storage in the controller together with the excess or shortage of passivity represented by the passivity index. The passivity index is also physically meaningful in the sense that it relates the passivity excess ($\delta_k > 0$) or shortage ($\delta_k < 0$) to the controller setpoints, gains, and steady-state voltage levels. Typically, $\delta_k < 0$ occurs since $\alpha > 0$. High power setpoints are detrimental to passivity while high voltage levels promote passivity, which is in line with engineering experience. Moreover, we can omit the last term of $\delta_k$, making the passivity evaluation independent of the equilibrium point information $\vert \pha v_{{\rm s},k} \vert$, without affecting the rigor of the results (since the last term is always positive).
\end{remark}

The passivity revealed in Proposition \ref{prop:passivity-dvoc} is valid for the \textit{large-signal} model, in contrast to the previous results for small-signal models \cite{dey2023passivity,spanias2019system}. In \cite{kong2023Control}, the dVOC with $\varphi = \pi/2$ is further modified to yield strict passivity w.r.t a prespecific equilibrium point. In general, $e^{j\varphi}$ rotates the current feedback according to the network impedance characteristics. The controller perceives a rotated network \cite{de2007voltage}; thus, the rotated network is the concern of the passivity and stability analysis.

\subsection{Passivity Index of the Network}


\begin{proposition}
\label{prop:network-passivity}
The static network equation in \eqref{eq:rotated-network}, with input $\phavec{v}$ and output $\phavec i_{\varphi}$, is input-feedforward passive w.r.t an equilibrium $\phavec{v}_{\rm s}$ and $\phavec{i}_{\varphi{\rm s}}$, where the passivity index is given as
\begin{equation}
\label{eq:network-passivity-index}
    \varepsilon_{\rm net} \coloneqq \lambda_{\min} \bigl(\Re \{e^{j\varphi} \phamat{Y} \} \bigr) > 0,
\end{equation}
where $\lambda_{\min}$ denotes the minimum eigenvalue.
\end{proposition}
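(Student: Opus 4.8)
The plan is to exploit that \eqref{eq:rotated-network} is a \emph{static} (memoryless) map, so that the natural storage function is $V\equiv 0$ and input-feedforward passivity w.r.t.\ the equilibrium collapses to a single quadratic-form inequality in the incremental voltage. First I would pass to increments: subtracting the steady-state identity $\phavec i_{\varphi{\rm s}} = e^{j\varphi}\phamat Y\,\phavec v_{\rm s} - e^{j\varphi}\phavec y\,v_{\rm g}$ from \eqref{eq:rotated-network} cancels the constant grid-voltage term and leaves $\Tilde{\phavec i}_{\varphi} = e^{j\varphi}\phamat Y\,\Tilde{\phavec v}$. In the real-variable $dq$ representation, the supply rate is $\vect u\trans\vect y = \Re\{\Tilde{\phavec v}\hermconj\Tilde{\phavec i}_{\varphi}\} = \Re\{\Tilde{\phavec v}\hermconj e^{j\varphi}\phamat Y\,\Tilde{\phavec v}\}$, so the whole statement reduces to lower-bounding this scalar by $\varepsilon_{\rm net}\norm{\Tilde{\phavec v}}^2$.

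The key algebraic step uses reciprocity of the network: the Kron-reduced admittance (including any passive virtual impedance) is symmetric, $\phamat Y = \phamat Y\trans$, hence so is $\mat M \coloneqq e^{j\varphi}\phamat Y$. For a complex \emph{symmetric} matrix the Hermitian part $\tfrac12(\mat M + \mat M\hermconj)$ equals the entrywise real part $\Re\{\mat M\}$, which is a real symmetric matrix; and for any complex vector $\vect x$ one has $\Re\{\vect x\hermconj\mat M\vect x\} = \vect x\hermconj\tfrac12(\mat M + \mat M\hermconj)\vect x = \vect x\hermconj\Re\{\mat M\}\vect x$, the cross terms cancelling by symmetry. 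Hence $\vect u\trans\vect y = \Tilde{\phavec v}\hermconj\Re\{e^{j\varphi}\phamat Y\}\,\Tilde{\phavec v} \geq \lambda_{\min}\!\bigl(\Re\{e^{j\varphi}\phamat Y\}\bigr)\norm{\Tilde{\phavec v}}^2$, i.e.\ $\vect u\trans\vect y \geq \dot V + \varepsilon_{\rm net}\,\vect u\trans\vect u$ with $V\equiv 0$, which is exactly input-feedforward passivity with index $\varepsilon_{\rm net}$ as in \eqref{eq:network-passivity-index}.

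The remaining point — and, I expect, the only genuinely delicate one — is the strict positivity $\varepsilon_{\rm net}>0$, i.e.\ $\Re\{e^{j\varphi}\phamat Y\}\succ 0$, which I would argue from the network construction rather than from the quadratic-form bookkeeping. In the homogeneous case where every branch shares the impedance angle $\varphi$ used in dVOC, the pre-reduction bus admittance factors as $e^{-j\varphi}\mat M$ with $\mat M = \mat A\,\mathrm{diag}(\vert\pha z_\ell\vert^{-1})\,\mat A\trans$ a real weighted Laplacian that is rendered positive definite by the grounding through the grid node (the connection $\phavec y\,v_{\rm g}$, together with passive loads/virtual impedances); since a Schur complement of a positive definite matrix stays positive definite, the Kron-reduced $e^{j\varphi}\phamat Y$ is real and positive definite, so $\Re\{e^{j\varphi}\phamat Y\} = e^{j\varphi}\phamat Y\succ 0$ with $\varepsilon_{\rm net} = \lambda_{\min}(e^{j\varphi}\phamat Y)$. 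For heterogeneous line angles $\theta_\ell$, $e^{j\varphi}\phamat Y$ is no longer real, and I would instead invoke $\Re\{e^{j\varphi}\phamat Y\}\succ 0$ as the standing ``predominantly inductive network'' hypothesis — it holds as long as the branch angles stay within $\pi/2$ of $\varphi$, so that the conductance-type weights $\cos(\varphi-\theta_\ell)$ entering the associated grounded Laplacian are positive. Pinning down this positivity is where the real content of the proposition sits.
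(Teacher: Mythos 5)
Your proof is correct and takes essentially the same route as the paper: pass to increments so that $\Tilde{\phavec v} \mapsto \Tilde{\phavec i}_\varphi = e^{j\varphi}\phamat Y\,\Tilde{\phavec v}$, identify the supply rate $\Re\{\Tilde{\phavec v}\hermconj\Tilde{\phavec i}_\varphi\}$ with the quadratic form of the Hermitian part of $e^{j\varphi}\phamat Y$ (which equals $\Re\{e^{j\varphi}\phamat Y\}$ by symmetry of the reduced admittance), and bound it below by $\lambda_{\min}\bigl(\Re\{e^{j\varphi}\phamat Y\}\bigr)\,\Tilde{\phavec v}\hermconj\Tilde{\phavec v}$ with zero storage. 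Your explicit use of network reciprocity and the structural justification of $\Re\{e^{j\varphi}\phamat Y\}\succ 0$ are sound refinements of points the paper leaves implicit or defers to a remark (non-singularity of the Kron-reduced admittance with the grid node eliminated).
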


\begin{proof}
In increment variables, it follows from \eqref{eq:rotated-network} that $\Tilde{\phavec{i}}_{\varphi} = e^{j\varphi} \phamat{Y}\, \Tilde{\phavec{v}}$. The proof is completed with $\Re \bigl\{\Tilde{\phavec{v}}\hermconj \Tilde{\phavec{i}}_{\varphi}\bigr\} = \Re \bigl\{\Tilde{\phavec{v}} \hermconj e^{j\varphi} \phamat{Y}\, \Tilde{\phavec{v}} \bigr\} = 1/2(\Tilde{\phavec{v}} \hermconj e^{j\varphi} \phamat{Y}\, \Tilde{\phavec{v}} + \Tilde{\phavec{v}} \hermconj e^{-j\varphi} \phamat{Y}\hermconj\, \Tilde{\phavec{v}}) = \Tilde{\phavec{v}} \hermconj \Re \{e^{j\varphi} \phamat{Y} \}\, \Tilde{\phavec{v}} \ge \lambda_{\min} \bigl(\Re \{e^{j\varphi} \phamat{Y} \}\bigr) \Tilde{\phavec{v}} \hermconj \Tilde{\phavec{v}}$,
where $\Re \bigl\{\Tilde{\phavec{v}}\hermconj \Tilde{\phavec{i}}_{\varphi}\bigr\}$ represents the power flowing into the network.
\end{proof}

\begin{remark}
$\Re \{e^{j\varphi} \phamat{Y} \}$ is real symmetric positive-definite since $\phamat{Y}$ is non-singular, where the row and column associated with the grid node are absent. We relate the network passivity index in \eqref{eq:network-passivity-index} to generalized short-circuit ratio (gSCR), defined as ${\rm gSCR} \coloneqq \lambda_{\min} \bigl(\Re \{e^{j\varphi} \phamat{Y} \} \bigr) = \varepsilon_{\rm net}$, for the concerned rotated inductive-resistive network. In \cite{dong2019small}, an inductive network is concerned, where $\phamat Y$ simplifies to $- j\mat B$, then ${\rm gSCR} = \lambda_{\min} \bigl(\mat{B} \bigr)$ \cite[Def. 2]{dong2019small}, arising from $\varphi = \pi/2$.
\end{remark}

\subsection{Decentralized Stability Conditions}

Theorem~\ref{thm:globally-stable} shows how the passivity properties of dVOC are connected to the transient stability of the closed-loop system.

\begin{theorem}
\label{thm:globally-stable}
Assume that there exists an equilibrium point. The multi-converter grid-connected system in \eqref{eq:rotated-dvoc} and \eqref{eq:rotated-network} is asymptotically stable w.r.t the equilibrium point if it holds that
\begin{equation}
\label{condi:stability}
    \delta_k + \varepsilon_{\rm net} > 0,\ \forall k \in \{1,\cdots,N\}.
\end{equation}
Moreover, if the equilibrium point is unique, the system is globally asymptotically stable.
\end{theorem}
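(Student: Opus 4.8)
The plan is to certify stability by composing the two passivity estimates around the feedback loop of Fig.~\ref{fig:block-diagram}(c), taking as Lyapunov function the sum $V \coloneqq \sum_{k=1}^{N} V_k$ of the node storage functions introduced in the proof of Proposition~\ref{prop:passivity-dvoc}. Since each $V_k = \tfrac{1}{2\eta_k}\lvert\pha v_k - \pha v_{{\rm s},k}\rvert^2$ is positive definite and radially unbounded in $\pha v_k - \pha v_{{\rm s},k}$, the same holds for $V$ regarded as a function of $\Tilde{\phavec v}$.

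First I would fix the interconnection in incremental coordinates. Subtracting the steady-state version of \eqref{eq:rotated-network} cancels the exogenous term $e^{j\varphi}\phavec y v_{\rm g}$ and leaves $\Tilde{\phavec i}_{\varphi} = e^{j\varphi}\phamat Y\,\Tilde{\phavec v}$, exactly as in the proof of Proposition~\ref{prop:network-passivity}; subtracting \eqref{eq:steady-relationship} from \eqref{eq:rotated-dvoc} likewise removes the constant $\omega_{\Delta}$, so that each node has incremental input $-(\pha i_{\varphi,k} - \pha i_{\varphi {\rm s},k})$, i.e., minus the $k$-th entry of the network output $\Tilde{\phavec i}_{\varphi}$, and incremental output $\pha v_k - \pha v_{{\rm s},k}$, i.e., the $k$-th entry of the network input $\Tilde{\phavec v}$. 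This is the matching that makes the two passivity inequalities concatenate.

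Next I would add up the estimates. Summing \eqref{eq:bound} over $k$ and substituting $\delta_k = -(\sigma_{\varphi,k}^{\star} + \alpha_k - \alpha_k\tfrac{1}{2v_k^{\star 2}}\lvert\pha v_{{\rm s},k}\rvert^2)$, the per-node cross terms collapse to $-\Re\{\Tilde{\phavec v}\hermconj\Tilde{\phavec i}_{\varphi}\}$, giving $\dot V \le -\sum_{k}\delta_k\lvert\pha v_k - \pha v_{{\rm s},k}\rvert^2 - \Re\{\Tilde{\phavec v}\hermconj\Tilde{\phavec i}_{\varphi}\}$. Proposition~\ref{prop:network-passivity} bounds $\Re\{\Tilde{\phavec v}\hermconj\Tilde{\phavec i}_{\varphi}\} \ge \varepsilon_{\rm net}\,\Tilde{\phavec v}\hermconj\Tilde{\phavec v} = \varepsilon_{\rm net}\sum_{k}\lvert\pha v_k - \pha v_{{\rm s},k}\rvert^2$, hence $\dot V \le -\sum_{k}(\delta_k + \varepsilon_{\rm net})\lvert\pha v_k - \pha v_{{\rm s},k}\rvert^2$. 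Under \eqref{condi:stability} the right-hand side is at most $-\min_k(\delta_k + \varepsilon_{\rm net})\lVert\Tilde{\phavec v}\rVert^2 < 0$ whenever $\Tilde{\phavec v}\ne 0$, so $\dot V$ is negative definite; Lyapunov's theorem then yields asymptotic stability of the equilibrium, and, $V$ being radially unbounded, the Barbashin--Krasovskii theorem \cite{khalil2002nonlinear} upgrades this to global asymptotic stability, which in turn forces the equilibrium to be the unique one.

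I do not expect a genuine obstacle: this is the textbook composition of an output-feedback-passive plant with an input-feedforward-passive interconnection, and Propositions~\ref{prop:passivity-dvoc} and~\ref{prop:network-passivity} have already carried the analytical load. The one place requiring care is the concatenation itself — one must verify that the cross terms of the two inequalities are the \emph{same} quantity $\Re\{\Tilde{\phavec v}\hermconj\Tilde{\phavec i}_{\varphi}\}$ with opposite signs, which hinges on the exogenous constants $\omega_{\Delta}$ and $v_{\rm g}$ genuinely dropping out of the incremental model because the equilibrium satisfies both \eqref{eq:steady-relationship} and the steady-state form of \eqref{eq:rotated-network} simultaneously. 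A secondary, essentially cosmetic point is that $\delta_k$ in \eqref{eq:node-passivity-index} depends on the equilibrium through $\lvert\pha v_{{\rm s},k}\rvert$; replacing it by its equilibrium-free lower bound (dropping the last, nonnegative term, as in the Remark after Proposition~\ref{prop:passivity-dvoc}) keeps \eqref{condi:stability} a purely decentralized, setpoint-based test without weakening the conclusion.
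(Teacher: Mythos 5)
Your proposal is correct and follows essentially the same route as the paper: the composite Lyapunov function $\nu = \sum_k V_k$, summation of the node passivity inequality \eqref{eq:bound}, cancellation of the cross terms against $\Re\{\Tilde{\phavec v}\hermconj\Tilde{\phavec i}_{\varphi}\}$, and the network bound of Proposition~\ref{prop:network-passivity} yielding $\dot\nu \le -\min_k(\delta_k+\varepsilon_{\rm net})\,\Tilde{\phavec v}\hermconj\Tilde{\phavec v}$. Your handling of the per-node indices via $\min_k$ and the Barbashin--Krasovskii argument is, if anything, slightly more explicit than the paper's.
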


\begin{proof}
We define a composite Lyapunov function as $\nu \coloneqq \sum\nolimits_k V_k$, which is positive definite and radially unbounded w.r.t the equilibrium point. It follows from Propositions \ref{prop:passivity-dvoc} and \ref{prop:network-passivity} that $\dot \nu = \sum\nolimits_k \dot V_k \leq \sum\nolimits_k \Re \{(\phaconj v_k - \phaconj v_{{\rm s},k} )( \pha i_{\varphi {\rm s},k} - \pha i_{\varphi,k} ) \} - \delta_k \vert\pha v_k - \pha v_{{\rm s},k} \vert^2 = -\Re \bigl\{\Tilde{\phavec{v}}\hermconj \Tilde{\phavec{i}}_{\varphi}\bigr\} - \delta_k \Tilde{\phavec{v}} \hermconj \Tilde{\phavec{v}} \leq -\bigl[ \lambda_{\min} \bigl(\Re \{e^{j\varphi} \phamat{Y} \} \bigr) + \delta_k \bigr] \Tilde{\phavec{v}} \hermconj \Tilde{\phavec{v}} = -\bigl( \varepsilon_{\rm net} + \delta_k \bigr) \Tilde{\phavec{v}} \hermconj \Tilde{\phavec{v}}$.
The condition in \eqref{condi:stability} resures that $\dot \nu$ is negative definite w.r.t the equilibrium point. Therefore, the equilibrium point is asymptotically stable, and it is globally asymptotically stable if the equilibrium point is unique.
\end{proof}

\begin{remark}
We do not require each node's dynamics to be strictly passive, in contrast to \cite[Theorem 1]{spanias2019system}; instead, the excess of the network's \textit{input-feedforward passivity} can compensate for the potential shortage of the node's \textit{output-feedback passivity}, mitigating the conservatism of the decentralized stability conditions. Moreover, the unique formulation of dVOC dynamics in $\alpha\beta$ coordinates allows us to exploit the natural passivity of the network. In comparison, the results based on polar coordinates in \cite{yang2020distributed} typically require the passivity of nodes to compensate for the non-passivity of the network coupling.
\end{remark}

\begin{remark}
For a single-converter case, \eqref{condi:stability} reduces to our prior results in \cite{he2012quanlitative}. Additionally, we remark that the decentralized conditions in \eqref{condi:stability} are inapplicable to dVOC-controlled islanded microgrids since the passivity leads the system voltages to converge to zero, where the origin appears as an equilibrium in the case of microgrids \cite{colombino2019global}. To prevent this, the origin is supposed to be unstable, which can be fulfilled with consistent controller setpoints, see \cite{colombino2019global} for further details.
\end{remark}

\section{Validation}

\begin{figure}
  \begin{center}
  \includegraphics[width = 1.0\linewidth]{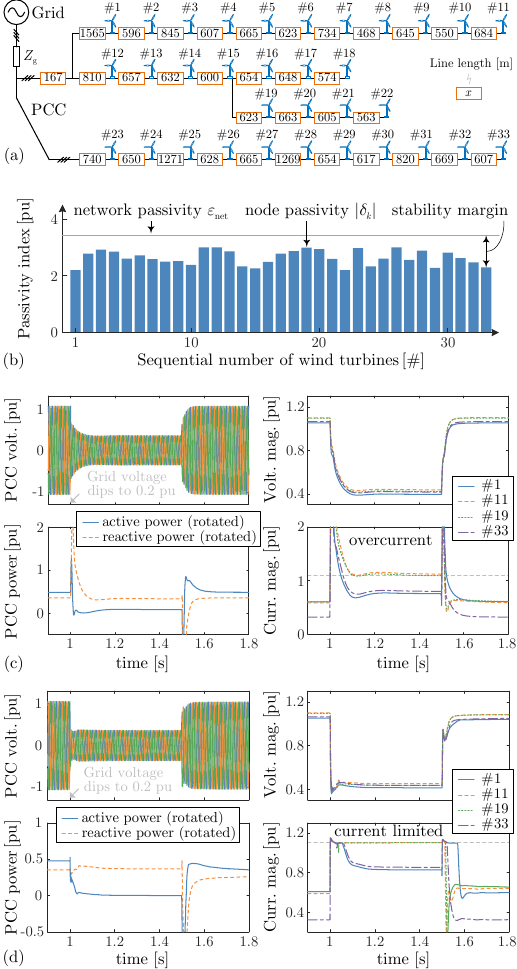}
  \caption{(a) The layout of a real wind power plant. (b) The node and network passivity indices are calculated. The system maintains transient stability under a grid voltage dip, (c) without applying current limiting, or (d) with currents being properly limited throughout by a saturation-informed strategy in \cite{desai2023saturation}.}\vspace{-4mm}
  \label{fig:case-study}
  \end{center}
\end{figure}

We illustrate the calculation of passivity indices and their application in ensuring the transient stability of a wind power plant (WPP). Our case study is based on a real WPP layout depicted in Fig.~\ref{fig:case-study}(a) \cite{li2017practical}, where all wind turbine converters are dVOC-controlled. Power setpoints are randomly selected between $[0, \sqrt{2}]$ pu but scaled down to $1.0$ pu if the apparent power exceeds this threshold. Voltage setpoints are uniformly set to $1.0$ pu. Moreover, we employ $\eta_k = 0.04 \times 100\pi$ rad/s and $\alpha_k = 2$ pu. The grid impedance is $0.05+j0.15$ pu, and the collector line parameters are $0.1153\, \Omega$/${\rm km}$,$1.05\times10^{-3}\, {\rm H}$/${\rm km}$.

The node passivity indices are conservatively estimated by the first two terms of \eqref{eq:node-passivity-index}, where $\delta_k < 0$ happens typically, implying a shortage of node passivity. The network passivity index is computed by \eqref{eq:network-passivity-index}, $\varepsilon_{\rm net} > 0$, contributed by the grid impedance, the collector line impedance, the step-up transformer impedance, and the virtual impedance emulated in converters. Since the equivalent grid impedance usually changes when a grid fault occurs, the changed impedance should be used when evaluating the network passivity. Moreover, the virtual impedance appearing for fault current limiting should also be incorporated into the network representation.

Fig.~\ref{fig:case-study}(b) displays the calculated passivity indices, where the value of $(\varepsilon_{\rm net} - \abs{\delta_k})$ represents the stability margin of each generating unit, $\varepsilon_{\rm net} - \abs{\delta_k} > 0$ for all units satisfying the decentralized stability conditions. The simulation results in Fig.~\ref{fig:case-study}(c) and (d) based on MATLAB/Simulink validate the transient stability of the system under a grid voltage dip, where the complete control dynamics are included, and the converters are average-valued, with fixed DC voltages and sufficiently fast inner-loop dynamics. To address the overcurrent in Fig.~\ref{fig:case-study}(c) while maintaining the grid-forming operation of the converters, we apply a saturation-informed current limiting strategy \cite{desai2023saturation} in Fig.~\ref{fig:case-study}(d). Alternatively, one can use the classical virtual-impedance current limiting strategy \cite{qoria2020current}. Both strategies result in an equivalent circuit with a virtual impedance and an internal voltage source. Our decentralized stability conditions readily apply to the equivalent circuit network with the virtual impedance. We have also validated the stability of the system across different voltage dip depths.


\section{Conclusion}

We have analytically studied the transient stability of multi-converter grid-connected systems employing the grid-forming dVOC control. The passivity of dVOC in large-signal form makes itself noteworthy in stability guarantees. By leveraging the passivity index of the dVOC-controlled node dynamics and the inherent passivity of the network, we develop decentralized stability conditions, serving as a fast and effective tool for controller parameter tuning, transient stability guarantees, and stability margin assessment.

\bibliographystyle{IEEEtran}
\bibliography{IEEEabrv,Bibliography}

\end{document}